\newenvironment{list1}{\begin{list}{$\bullet$}
{\topsep 0 pt \parsep 0 pt \partopsep 0 pt \itemsep 0
pt}}{\end{list}}
\newtheorem{defi}{Definition}
\newtheorem{theo}{Theorem}
\begin{document}
%
\title{FTOS-Verify: Analysis and Verification of Non-Functional Properties for Fault-Tolerant Systems}


\author{
\IEEEauthorblockN{Chih-Hong Cheng\authorrefmark{1},
                    Christian Buckl\authorrefmark{2},
                    Javier Esparza\authorrefmark{3},
                    Alois Knoll\authorrefmark{1}}
\IEEEauthorblockA{\authorrefmark{1}Unit 6: Robotics and Embedded
Systems, Department of Informatics, TU Munich, Germany}
\IEEEauthorblockA{\authorrefmark{2}Fortiss GmbH, Germany}
\IEEEauthorblockA{\authorrefmark{3}Unit 7: Theoretical Computer
Science, Department of Informatics, TU Munich, Germany
\\Email:\{chengch,buckl,esparza,knoll\}@in.tum.de }}


%


\maketitle

\begin{abstract}
The focus of the tool FTOS is to alleviate designers' burden by
offering code generation for non-functional aspects including
fault-tolerance mechanisms. One crucial aspect in this context is
to ensure that user-selected mechanisms for the system model are
sufficient to resist faults as specified in the underlying fault
hypothesis. In this paper, formal approaches in verification are
proposed to assist the claim. We first raise the precision of FTOS
into pure mathematical constructs, and formulate the deterministic
assumption, which is necessary as an extension of Giotto-like
systems (e.g., FTOS) to equip with fault-tolerance abilities. We
show that local properties of a system with the deterministic
assumption will be preserved in a modified synchronous system used
as the verification model. This enables the use of techniques
known from hardware verification. As for implementation, we
develop a prototype tool called FTOS-Verify, deploy it as an
Eclipse add-on for FTOS, and conduct several case studies.
\end{abstract}



%
\IEEEpeerreviewmaketitle

\section{Introduction}

Fault-tolerant systems refer to systems with the ability to
withstand transient or permanent faults; these faults may be
caused by design errors, hardware failures or environmental
impacts. Applications domains are amongst others the medical,
avionic or automation domain. The introduction of fault-tolerance
abilities into embedded design brings two potential issues
compared to standard systems.

\begin{list1}

\item Fault-tolerance mechanisms require redundancy; additional
means (hardware, information, etc.) are not necessary to implement
the actual functionality during fault-free operation. From a
supplier's view, it is always desirable to equip the system with
"just enough" fault-tolerance abilities, i.e., existing mechanisms
should be sufficient for the resistance of faults based on the
underlying fault hypothesis, but extra mechanisms should not be
introduced due to cost reasons. \item In addition, verification is
of special interest in comparison to standard systems, since
failures might lead to severe
      damages or even endangerment of life.
\item Furthermore, extra time required for validation might
postpone the time-to-market schedule.
    \end{list1}
To alleviate the designers' burden on above issues,
in this paper we concentrate on systematic methodologies to
integrate automatic formal verification (in particular, verifying
non-functional properties regarding the validity of
fault-tolerance mechanisms) into the design process of
fault-tolerant systems, since it is regarded as a rigorous method
to guarantee correctness.
We use FTOS 
\cite{buckl:2007:gft} as our target, which is a model-based tool
for the development of fault-tolerant real-time systems. In the
setting of FTOS, the designers can select predefined (or
self-defined) fault-tolerance mechanisms during their design of
their system models, and the corresponding code is generated
automatically by the tool.
Our goal is to step further  by reporting designers a proof
whether the equipped mechanisms are sufficient to resist the fault
as specified in the fault hypothesis.

Our first job is to raise the precision of FTOS into pure
mathematical constructs. To achieve this purpose, we propose a
formal mathematical model called \emph{global-cycle-accurate (GCA)
system} (section~\ref{sec:GCA.Syntax.Semantics}), which captures
the essence of such systems. Intuitively, a GCA system can be
viewed as an extension from Giotto systems
\cite{Henzinger01giotto:a} based on the model of computation
\emph{Logical Execution Time} (see section~\ref{sec:LET} for
concept description) equipped with redundancies. Nevertheless,
challenges for verifying such systems remain.
\begin{list1}
\item First, it is difficult to construct the verification model
due to the inherent behavior of the MoC, because a GCA system is
synchronous in logical (global-tick) level while asynchronous in
action (micro-tick) level.

\item Secondly, intuitive extensions of redundancy break internal
determinism originally maintained in Giotto-like systems.

\item Lastly, mapping from models to different platforms
(synchronous, asynchronous) while preserving the property is
non-intuitive.
\end{list1}

To overcome these challenges, we propose the concept of the
deterministic assumption
(section~\ref{sec:Deterministic.Assumption}), which relates all
deployed platforms with common features. We discuss impacts of the
deterministic assumption in GCA systems
(section~\ref{sec:Impacts.Deterministic.Assumption}). Most
importantly, with deterministic assumption, some properties are
preserving among all deployed platforms, and we can construct a
simplified model for verification, where property checking can be
achieved by examining a smaller set of reachable state space. The
result also holds with additional assumptions regarding the
introduction of faults (section~\ref{sec:Effect.Fault}).

With the knowledge, we implement an Eclipse add-on called
FTOS-Verify, which enables the use of formal verification under
FTOS, and outline our case study
(section~\ref{sec:Implementation},~\ref{sec:Case.Study}. The
template-based approach enables the software and the templates to
be extended easily, for the introduction and automatic
verification for fault-tolerance mechanisms. Also features in
FTOS-Verify enable non-experts in verification to use the tool
with ease. We mention related work and give the conclusion in
section~\ref{sec:Related.Work}
and~\ref{sec:Conclusion.Future.Work}.

\subsection{The Concept of Logical Execution Time\label{sec:LET}}

The concept of fixed logical execution time (FLET), which is
widely accepted by some design tools, e.g., Giotto, TDL
\cite{simmons:1998:tdl}, HTL \cite{ghosal:2006:htl}, is the
primary technique used in FTOS and thus required to be mentioned
forehand. The basic motivation for FLET is based on the synchrony
assumption of synchronous languages, which assumes an infinitely
fast underlying hardware. For a valid implementation, the actions
of a logical moment must be executed before the next logical
moment appears; right before a logical moment, the system behaves
deterministic. However, when observing the behavior at the level
of
individual actions, no assumptions can be made. 

When applying this concept, the designer only specifies the start
time $t_{start}$ and finish time $t_{end}$ (mostly it is
periodic). Contrary to the traditional view that an output should
be ready as soon as possible, the FLET compiler will ensure that
the output is observable at $t_{end}$ but no earlier. Within the
guarantee, when multiple systems execute in one machine, the
compiler can also allow preemptive scheduling. Under the concept
of FLET, \emph{internal determinism} is guaranteed, meaning that
the relative ordering of operations is the factor to derive
deterministic results, irrelevant of real time.
Fig.~\ref{fig:Relative.Ordering.Giotto} illustrates that in
Giotto, two deployments with same relative ordering behave the
same.

\begin{figure}
 \centering
 \includegraphics[width=0.5\columnwidth]{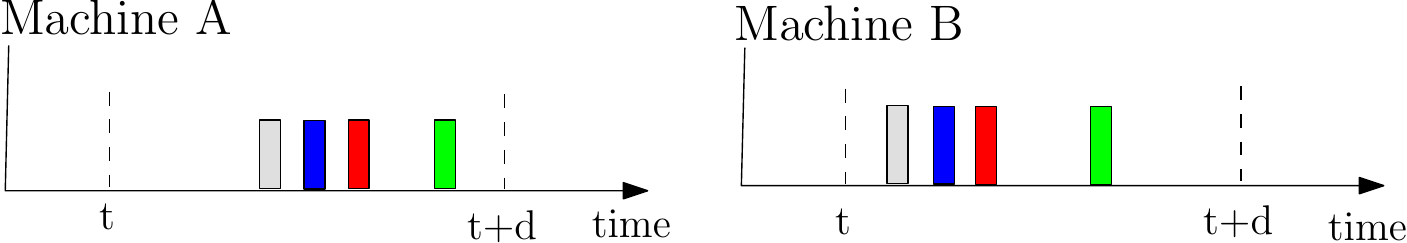}
 \caption{Two Giotto deployments with different scheduling policies.}
 \label{fig:Relative.Ordering.Giotto}
\end{figure}

\section{GCA systems with Redundancies: Syntaxes and Semantics\label{sec:GCA.Syntax.Semantics}}

To describe the FTOS execution model formally, in the following we
define several terms, and introduce a simple mathematical
construct called GCA systems.

\begin{defi}
Define the pattern of the machine with coefficient $(x,n)$ be
$\mathcal{A}_{x,n} = (V \cup V_{env_x}, \overline{\sigma}, Q_x)$.
\begin{list1}
    \item $V$ is the set of arrays with length $n$, and $V_{env_x}$ is the set of environment variables.
    \item  $\overline{\sigma} := \sigma_1 ; \sigma_2 ; \ldots ; \sigma_k$ is a fixed sequence of actions, where for all $j = 1 \ldots k$,
      $ \sigma_j := \verb"send"(a[x])\; | \; a[x] \leftarrow \verb"e" \; |\; \verb"receive"(a[x \oplus 1],\ldots, a[x \oplus n])$ is the atomic action; $\oplus$ is the modulo-plus operator over $n$, $a[x]$ is the x-th element of the array $a$, and $\verb"e"$ is an operation over variables in $V_{env_x} \cup \bigvee_{i \in 1 \ldots n,\; a \in V} a[x \oplus i]$.
    \item $Q_x$ is the message queue.
\end{list1}
\end{defi}

\begin{defi}
Define the redundant system $\mathcal{S}_R$ with n-redundancy be
$\bigvee_{i=1 \ldots n} \mathcal{A}_{i,n}$.
\end{defi}

We explain the intuitive meaning of the formal syntax with the
assistance of fig.~\ref{fig:Parameterized}. In the definition of a
pattern, $n$ is used to represent the number of redundancies
deployed, and $x$ is the index of the machine. Fault tolerance is
mostly achieved by voting of the same variable from different
machines, and in FTOS it is done distributively in each machine.
Therefore, for the array $b[1,\ldots,n]$, in machine $x$ we use
$b[x]$ to store its own value, while other variables in array
$b[1,\ldots,n]$ are used as the stored copy sent by other
machines. In fig.~\ref{fig:Parameterized}, sequence of actions are
instantiated based on different machine index from the actions
specified in the pattern.

\begin{figure}
 \centering
 \includegraphics[width=0.6\columnwidth]{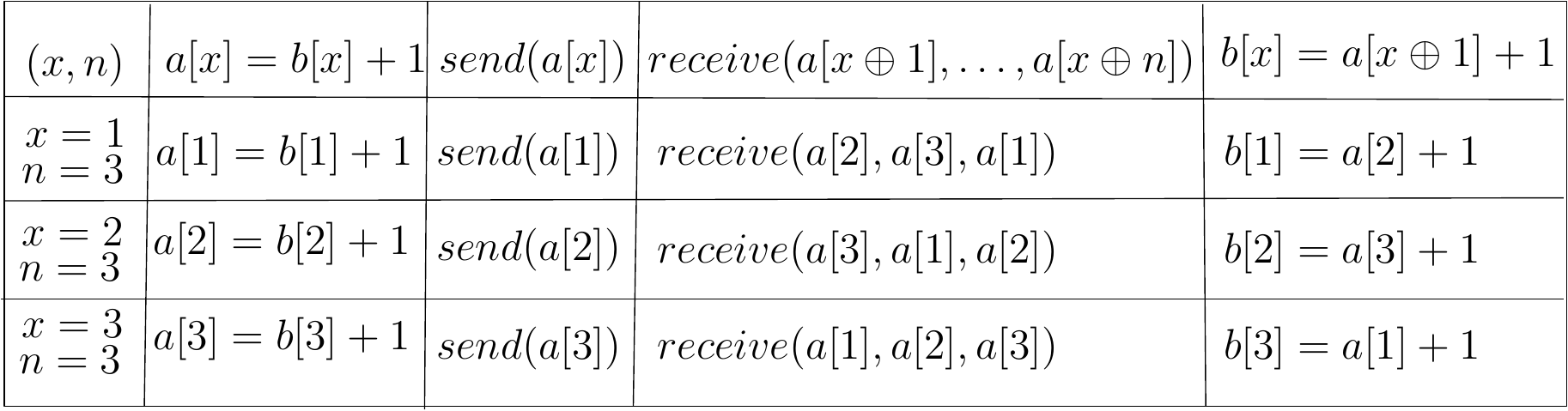}
 \caption{Parameterized actions defined in the pattern, and an instantiation with $n=3$.}
 \label{fig:Parameterized}
\end{figure}

\begin{defi}
The configuration of $\mathcal{S}_R=\bigvee_{i=1 \ldots n}
\mathcal{A}_{i,n}$ is $((v_1, q_1, \Delta_{next_1}), \ldots,
(v_n,q_n,\Delta_{next_n}))$. For each machine $i= 1\ldots n$,
\begin{list1}
    \item $v_i$ is the set of the current values for the variable set $V$.
    \item $q_i$ is the current content for the local message queue $Q_i$.
    \item Let $\verb"atomic"(\overline{\sigma})$ be the set of atomic operations in the action sequence $\overline{\sigma}$, then $\Delta_{next_i} \in \verb"atomic"(\overline{\sigma}) \cup \{\verb"null"\}$ is the next atomic action taken in $\overline{\sigma}$.
\end{list1}
\end{defi}

\begin{defi} The change of configuration $\mathcal{S}_R=\bigvee_{i=1 \ldots n} \mathcal{A}_{i,n}$ is caused by the following operations.
\begin{enumerate}
    \item For machine $i$, let $s$ and $\sigma_{j}$ be the current configuration for $a[i]$ and $\Delta_{next_i}$. An action $\sigma_j := a[i] \leftarrow \verb"e"$ updates $a[i]$ from $s$ to $\verb"e"$, and changes $\Delta_{next_i}$ to $\sigma_{j+1}$ if $j=1 \ldots k-1$ and \verb"null" otherwise.

    \item For machine $i$, let $\hat{a}$ and $\sigma_{j}$ be the current configuration for $a[i]$ and $\Delta_{next_i}$. For all $k = 1 \ldots n, k\neq i$, let $q_k$ be the configuration of the local queue in machine $k$. An action $\sigma_j := \verb"send"(a[i])$ updates the queue from $q_k$ to $q_k\circ (a[i],\hat{a})$, and changes $\Delta_{next_i}$ to $\sigma_{j+1}$ if $j=1 \ldots k-1$ and \verb"null" otherwise.

    \item For machine $i$, for all $j= 1\ldots n$, let $s[j]$ be the current configuration for $a[j]$.
     Then an action $\sigma_j := \verb"receive"(a[i \oplus 1],\ldots, a[i \oplus n])$ performs the following updates.
     \begin{list1}
       \item This following step is done iteratively over $i'=1 \ldots n, i' \neq i$.
       \begin{list1}
              \item Let the current configuration of the local queue be $q_i = msg_1 \circ msg_2 \ldots \circ (a[i'], a_1) \circ \ldots \circ (a[i'], a_{last}) \ldots \circ msg_k$.  Let $(a[i'],a_1); \ldots; (a[i'],a_{last})$ be the subsequence of messages with variable $a[i']$ in the queue. Then $\Delta_{i,k}$ updates $Q_i$ by $msg_1 \circ msg_2 \ldots\circ \ldots msg_k$, and updates variables $a[i']$ by $a_{last}$.
                  Note if no related message exists, then variables will not be updated.
       \end{list1}
       \item It updates $\Delta_{next_i}$ from $\sigma_{j}$ to $\sigma_{j+1}$ if $j=1 \ldots k-1$ and \verb"null" otherwise.
     \end{list1}
\end{enumerate}
\end{defi}

\begin{defi} Define a GCA (global-cycle-accurate) system with n-redundancies over $\mathcal{S}_R$ as $\mathcal{S}=(\mathcal{S}_R, \Delta_\mathcal{T}, \mathcal{C})$.
\begin{list1}
    \item $\mathcal{S}_R=\bigvee_{i=1 \ldots n} \mathcal{A}_{i,n}$.
    \item $\Delta_\mathcal{T}$ is the global periodic jump with parameter $\mathcal{T}$.
    \item $\mathcal{C}$ is the global clock.
\end{list1}
\end{defi}

Intuitively, the semantics of GCA systems is that on the global
level, for each machine $\mathcal{A}_i$ the scheduling sequence is
constrained by $\mathcal{T}$; starting from time equals to zero,
for every $\mathcal{T}$ time units, it should complete the
sequence of actions defined by $\overline{\sigma}$.

\begin{defi} The change of configuration of $\mathcal{S}=(\mathcal{S}_R, \Delta_\mathcal{T}, \mathcal{C})$ is caused by following operations.
\begin{enumerate}
    \item Actions defined by $\mathcal{S}_R$.
    \item For all $x= 1 \ldots n$, $\Delta_\mathcal{T}$ reads $\bigvee_{a \in V}a[x]$, updates $V_{env_x}$, and resets $\Delta_{next_x}$ to $\sigma_1$, respectively.
    \item The clock reading of $\mathcal{C}$ changes as time advances.
\end{enumerate}
\end{defi}

\begin{defi} A GCA (global-cycle-accurate) system must satisfy the constraint: Starting from $t=0$, $\Delta_\mathcal{T}$ is always activated
with the period of $\mathcal{T}$. When $\Delta_\mathcal{T}$
terminates, configurations over $\Delta_{next_1}, \Delta_{next_2},
\ldots \Delta_{next_n}$ are always with values $\verb"null",
\verb"null",  \ldots , \verb"null"$, respectively.
\end{defi}

\subsection{Giotto} The model of computation of Giotto-like systems, e.g., Giotto, TDL,
or HTL, can thus be viewed as the case where every component of
such system is a GCA system without (1) redundancies and (2)
\verb"send" and \verb"receive" actions, where $\mathcal{T}$ can be
viewed as the logical deadline.

\section{Deterministic Assumption\label{sec:Deterministic.Assumption}}

\subsection{Deterministic Assumption in GCA Systems\label{sec:Deterministic.Assumption.In.GCA}}

Nevertheless, when applying the concept of logical execution time
to GCA systems where $n$ is not 1, intuitive extensions for
scheduling policies from Giotto make internal determinism no
longer guaranteed. The reason is that machines need to communicate
to each other to derive consensus results. However, different
scheduling policies (while the relative ordering is the same)
differ from the result. Fig.~\ref{fig:Nondeterminism} illustrates
this idea. When $M_1$, $M_2$, and $M_3$ are three machines which
implement
 the triple modular redundancy functionalities, liveness messages sent by $M_3$ will be received by $M_1$ and $M_2$. However,
 for $M_3$, due to OS scheduling decisions its execution trace can change to that of $M_4$. If so, messages might not be received successfully,
 implying that the overall system behavior differs (internal nondeterminism).

The violation of internal determinism hinders the portability of
the model. For example, the result of fault-tolerance mechanisms
may differ simply due to scheduling policies on different machines
(this is undesired). To solve this problem, we thus propose the
concept of \emph{deterministic assumption} - it is an additional
constraint where every deployment should ensure.

\begin{figure}
 \centering
 \includegraphics[width=0.4\columnwidth]{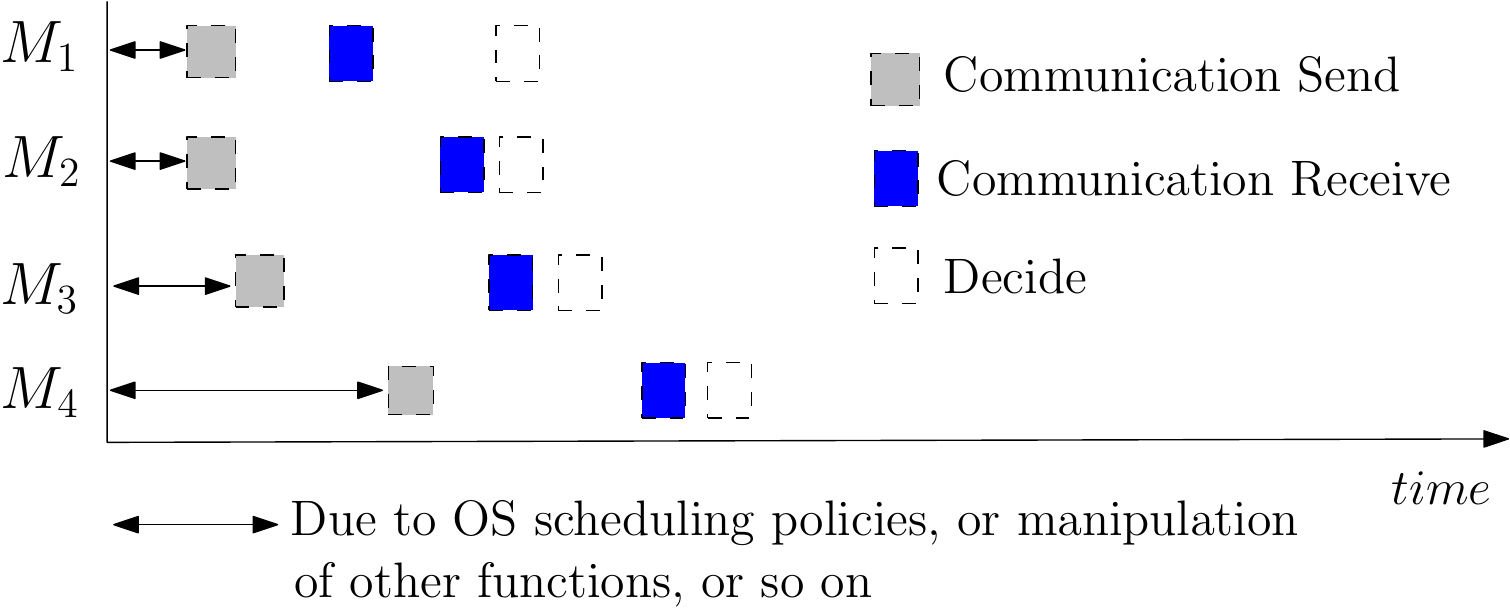}
 \caption{Internal non-determinism due to OS scheduling policies.}
 \label{fig:Nondeterminism}
\end{figure}

\begin{defi} With definition in a GCA system $\mathcal{S}$ as follows:

\begin{list1}
     \item  In the pattern definition of the machine, let $\sigma_{\alpha} := \verb"send"(a[x])$ and $\sigma_{\gamma} := \verb"send"(a[x])$
           be the predecessor and successor send operation for
          $\sigma_{\beta}:= \verb"receive"(a[x \oplus 1],\ldots, a[x \oplus n])$ in $\overline{\sigma}$, i.e.,
          $\alpha < \beta < \gamma$.

    \item On machine $i$, define $\tau_{Send,\alpha,i}$ be the clock reading from $\mathcal{C}$ where the action $\sigma_{\alpha}$ happens. If $\sigma_{\alpha}$ does not exist, then we let $\tau_{Send,\alpha,i} = -\infty$.

    \item On machine $j$, define $\tau_{Receive,\beta,j}$ be the clock reading from $\mathcal{C}$ where the action $\sigma_{\beta}$ happens.

    \item On machine $k$, define $\tau_{Send,\gamma,k}$ be the clock reading from $\mathcal{C}$ where the action $\sigma_{\gamma}$ happens.
        If $\sigma_{\gamma} := \verb"send"(a[x])$ does not exist, then we let $\tau_{Send,\gamma,k} = \infty$.

\end{list1}
then a GCA system with $n$-redundancy satisfies the deterministic
assumption if, for all machine $i$,$j$,$k$, $\tau_{Send,\alpha,i}
< \tau_{Receive,\beta,j}< \tau_{Send,\gamma,k}$.

\end{defi}

Intuitively, this means that before machine $j$ starts a receive
action for variable $a[j \oplus 1],\ldots, a[j \oplus n]$, all
messages sent to $j$ earlier with contents related to $a[j \oplus
1],\ldots, a[j \oplus n]$ should be ready in the local network
queue $Q_j$, but an unrelated message ($\sigma_{\gamma} :=
\verb"send"(a[x])$) should not arrive earlier.

\section{Impact of Deterministic Assumption\label{sec:Impacts.Deterministic.Assumption}}
In the following, we discuss how the use of deterministic
assumption influences verification and implementation.

\subsection{Deterministic Assumption Simplifies Verification}

For verification, by introducing deterministic assumption we have
explicit control over the sequential ordering. Regarding some
local properties, we can have an untimed synchronous model without
false positives.

\begin{theo} Let $\mathcal{S} = (\mathcal{S}_R, \Delta_\mathcal{T}, \mathcal{C})$  be a GCA system with n-redundancies satisfying the deterministic assumption,
and $\mathcal{S}_{sync}$ be a GCA system where each machine in
$\mathcal{S}_{sync} = (\mathcal{S}_R, \Delta_\mathcal{T},
\mathcal{C})$ executes synchronously in actions\footnote{i.e., in
each period, let $T_{i,\sigma_k}$ be the clock reading from
$\mathcal{C}$ where the $k$-th action in $\overline{\sigma}$ is
activated on machine $i$, then for all $i, j = 1 \ldots n$,
$T_{i,\sigma_k}=T_{j,\sigma_k}$.}. Then for verification
conditions $\varphi$,  $\mathcal{S} \models \varphi
\Leftrightarrow \mathcal{S}_{sync} \models \varphi$ if  $\varphi$
has the following constraints.
\begin{enumerate}
    \item Property $\varphi$ is in PLTL and does not use the operator $\mathbf{X}$.
    \item There exists an $i\in 1\ldots n$ such that for all propositional variable $p$ used in property $\varphi$, $p$ is a predicate over $V_i \cup \Delta_{next_i}$, i.e.,
           $p$ is of the format $\verb"exp = c"$,
           where $\verb"exp":= \verb"exp" + \verb"exp"\;|\;\verb"exp" * \verb"exp"\;|\;\verb"exp" / \verb"exp"\;|\;v_a\;|\;-(v_a)\;|\;\verb"const"$;
           $v_a \in V_i \cup \Delta_{next_i}$, $\verb"c"$ and $\verb"const"$ are constants (in-machine/local properties without message queue).
\end{enumerate}
\end{theo}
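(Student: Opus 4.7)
The plan is to reduce the theorem to a stutter-equivalence argument and then invoke the classical result that $\mathbf{X}$-free PLTL is invariant under finite stuttering. Concretely, for every trace of $\mathcal{S}$ I will produce the unique trace of $\mathcal{S}_{sync}$ with the same initial configuration and the same inputs from $V_{env_\bullet}$, and show that when both are projected onto the local state $(V_i, \Delta_{next_i})$ of the fixed machine $i$ named in the second hypothesis, the two projections differ only by finite stuttering. Since every propositional variable of $\varphi$ is a predicate purely over $V_i \cup \Delta_{next_i}$, its evaluation along a run depends only on this projected trace, and stutter-invariance then gives $\mathcal{S}\models\varphi\Leftrightarrow\mathcal{S}_{sync}\models\varphi$.

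The central technical step is a simulation lemma: along corresponding runs of $\mathcal{S}$ and $\mathcal{S}_{sync}$, the sequence of successive values that machine $i$'s atomic actions and the periodic jump $\Delta_\mathcal{T}$ install in $(V_i,\Delta_{next_i})$ is identical. I would prove it by induction on the number of atomic actions of machine $i$ executed since $t=0$, with a secondary induction on the global period count. The assignment case $a[i]\leftarrow\texttt{e}$ is local because $\texttt{e}$ ranges over $V_{env_i}$ and the $a[i\oplus j]$ already residing in $V_i$, so by the inductive hypothesis it produces the same new value in both systems. The send case is trivial for the projection since it does not alter machine $i$'s own $V_i$ or advance $\Delta_{next_i}$ in a way that depends on any other machine. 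The periodic jump $\Delta_\mathcal{T}$ fires at the same global times in both systems and affects $V_{env_i}$ and $\Delta_{next_i}$ identically.

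The hard part, and the only place where the deterministic assumption is actually used, is the receive case $\sigma_\beta:=\texttt{receive}(a[i\oplus 1],\ldots,a[i\oplus n])$. I have to show that when machine $i$ fires $\sigma_\beta$, the local queue $Q_i$ contains exactly the same set of relevant $(a[k],\cdot)$ entries in both systems, so that the value installed in each $a[i\oplus j]$ is the same. In $\mathcal{S}_{sync}$ this is immediate from lockstep execution: every other machine $k$ has just performed $\sigma_\alpha$ and has not yet performed $\sigma_\gamma$, so the last relevant message in $Q_i$ for each array is the one produced at position $\alpha$. In $\mathcal{S}$ the same conclusion follows from the inequality $\tau_{Send,\alpha,k}<\tau_{Receive,\beta,i}<\tau_{Send,\gamma,k}$ for every $k$: the left inequality rules out missing messages (the $\alpha$-send of machine $k$ has already enqueued its entry), while the right inequality rules out stray messages from a future period's $\gamma$-send. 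Invoking the induction hypothesis on the value of $a[k]$ on each sender $k$ at the instant of its $\sigma_\alpha$, the enqueued payloads coincide across the two systems, so the post-receive values of $a[i\oplus j]$ in $V_i$ coincide.

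With the simulation lemma in hand, I can finish as follows. Any step of $\mathcal{S}$ taken by a machine $j\neq i$, or any clock advance not coinciding with an action of machine $i$, leaves $(V_i,\Delta_{next_i})$ unchanged and therefore contributes only stuttering to the projection. Any step of $\mathcal{S}_{sync}$ that is not a change of $(V_i,\Delta_{next_i})$ (e.g., pure clock advance) likewise stutters. By the lemma, deleting maximal stutter blocks from both projected traces yields the very same sequence of local states. The standard stutter-invariance theorem for $\mathbf{X}$-free PLTL then yields the claimed equivalence. The main obstacle is the queue bookkeeping in the receive case, specifically making the invariant strong enough that the \emph{two-sided} inequality in the deterministic assumption is simultaneously exploited to exclude both absent past messages and early future messages; once this invariant is formulated carefully, the rest of the argument is routine.
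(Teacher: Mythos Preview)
Your proposal is correct and follows essentially the same route as the paper: project onto the local state $(V_i,\Delta_{next_i})$, use the deterministic assumption to pin down the contents of $Q_i$ at every \texttt{receive} so that the projected traces of $\mathcal{S}$ and $\mathcal{S}_{sync}$ are stutter equivalent, and then invoke stutter-closure of $\mathbf{X}$-free PLTL. The paper organizes the same argument slightly differently---it partitions $\overline{\sigma}$ into groups delimited by ``valid'' receives rather than framing it as an explicit simulation lemma---but the substance is identical; one small point to tighten in your write-up is that the induction must be stated simultaneously for all machines (on the action index in $\overline{\sigma}$) rather than for machine $i$ alone, since your receive case appeals to the hypothesis on the sender $k$'s state.
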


\begin{proof}

\noindent\textbf{(Step 0: Preparation)} To prove the theorem, we
utilize theories of projection and stutter equivalence relation
(for complete definitions, see~\cite{peled:98:theo}).
\begin{list1}
    \item For a GCA system of n-redundancies, define $\circleddash_i$ be the projection over the execution trace which preserves variables in $V_i$
    and the next action $\Delta_{next_i}$ in machine $i$
    (e.g., $(v_{i_1}, \Delta_{next_{i_1}})$$(v_{i_2}, \Delta_{next_{i_2}})...$).
    \item Let $\Sigma$ be the set of alphabets, and $\curlyvee: \Sigma^{*} \rightarrow \Sigma^{*}$ be the stutter removal operator which replaces every substring of identical alphabets by one alphabet (e.g., from $xxxyyzzz$ to $xyz$), then two words $u_1, u_2\in \Sigma^{*}$ are called \emph{stutter equivalent} if $\curlyvee(u_1) = \curlyvee(u_2)$.
\end{list1}

\noindent\textbf{(Step 1: Stuttering over projections for
asynchronous systems)}
Without loss of generality we assume $\varphi$ describes variables
$V_i$ and the next action in machine $i$. For our proof, we only
consider situations in one interval $\mathcal{T}$, since the
system performs periodically and can be proved inductively over
$\mathcal{T}$. For simplicity, in our notation, $\sigma_0$ is
equal to \verb"null".

In $\overline{\sigma}$, define a receive action
$\sigma_{k_1}:=\verb"receive"(a[x \oplus 1],\ldots, a[x \oplus
n])$ \textbf{valid} if
\begin{list1}
    \item there exists $\sigma_{k_2}:=\verb"send"(a[x])$ in $\overline{\sigma}$ where $k_2<k_1$ and,
    \item $\theta_{send,max} \geq \theta_{rec,max}$, where $\theta_{send,max}= \verb"MAX"_{\theta=0\ldots k_1-1}\{\theta|\sigma_{\theta}:=\verb"send"(a[x])\; or \; \verb"null"\}$, and $\theta_{rec,max}= \verb"MAX"_{\theta=0\ldots k_1-1}\{\theta|\sigma_{\theta}:=\verb"receive"(a[x \oplus 1],\ldots, a[x \oplus n]) \;or\; \verb"null"\}$.
\end{list1}

Without loss of generality, we assume the sequence
$\overline{\sigma}$ contains $\alpha$ $\verb"receive"()$ actions
that are valid. We use $\verb"receive"()_{j,x,y}$ to represent the
$y$-th valid $\verb"receive"()$ action, where it is the $x$-th
action in sequence $\overline{\sigma}$, executed on machine $j$.
Also we use $\sigma_{j,x}$ to represent the $x$-th action in
sequence $\overline{\sigma}$, executed on machine $j$. With
$\alpha$ valid $\verb"receive"()$ operations, we separate the
execution of $i$ into $\alpha +1$ groups, namely $
\{\sigma_{i,1};\ldots;\sigma_{i,x_1 -1}\}_1 \{
\small{\verb"receive"()}_{i,x_1,1}; \ldots; \sigma_{i,x_2 -1} \}_2
\ldots $\\$\{
\small{\verb"receive"()}_{i,x_{\alpha},\alpha};\ldots;\sigma_{i,k}
\}_{\alpha+1} $, where $\small{\verb"receive"()}_{i,x_1,1}
,\ldots,$$\small{\verb"receive"()}_{i,x_{\alpha},\alpha}$ are
valid.

Consider the influence regarding the execution of machine $j$,
whose execution sequence is the same as $i$. Note that with
deterministic assumption, the execution sequence (consider
globally) is not arbitrarily interleaved. For $\mu= 1\ldots
\alpha+1$, for the $\mu$-th group of machine $i$,
\begin{enumerate}
    \item For the valid $\verb"receive"(a[i \oplus 1],\ldots, a[i \oplus n])_{i,x_{\mu},\mu}$ action in $i$, its value is determined; the deterministic assumption assures that
          $\sigma_{j,x_{\mu}'}:=\verb"send"(a[j]), x_{\mu}'<x_{\mu}$ will be executed earlier than $\verb"receive"(a[i \oplus 1],\ldots, a[i \oplus n])_{i,x_{\mu},\mu}$, and $\sigma_{j,\tilde{x_{\mu}}}:=\verb"send"(a[j]), x_{\mu}< \tilde{x_{\mu}}$ (if existed) will be executed later than $\verb"receive"(a[i \oplus 1],\ldots, a[i \oplus n])_{i,x_{\mu},\mu}$.
    \item For an invalid $\verb"receive"()$ action in $i$, it equals to a \verb"null" operation.
    \item For other actions in $j$, their executions do not interfere $\verb"receive"(a[i \oplus 1],\ldots, a[i \oplus n])_{i,x_{\mu},\mu}$.
    Since for all other actions $\sigma_{i,\varsigma}$ of $i$, it does not retrieve the value from the queue, and $\verb"receive"(a[i \oplus 1],\ldots, a[i \oplus n])_{i,x_{\mu},\mu}$, the first action in this group, is determined, it is determined based on the previous action $\sigma_{i,\varsigma-1}$, regardless of actions in $j$.
\end{enumerate}

Results from (1) to (3) imply that the execution of actions on $j$
leads to stuttering states when the operator $\circleddash_i$ is
actuating over the state space.

\noindent\textbf{(Step 2: Stuttering-projective equivalence)}
Let $c_a$ and $c_b$ be the system configuration over the
projection of $\circleddash_i$. Let $c_a
\rightarrow_{\{\sigma_{k_i}\}} c_b$ represent the configuration
change from $c_a$ to $c_b$ caused by actions $\sigma_k$ in machine
$i$, and $c_a \rightsquigarrow_i c_b$ represent the configuration
change from $c_a$ to $c_b$ caused by actions operating on deployed
machines except $i$, and the advancing of the clock. For
$\mathcal{S}$, the projection of the execution trace over machine
$i$ with operator $\circleddash_i$ in $\mathcal{T}$ is
\[ c_{i_1} \rightsquigarrow_i c_{i_1} \rightarrow_{\{\sigma_{1,i}\}} c_{i_2} \rightsquigarrow_i c_{i_2} \rightarrow_{\{\sigma_{2,i}\}} c_{i_3} \rightsquigarrow_i c_{i_3}  \ldots c_{i_{k+1}}\]

Consider $\mathcal{S}_{sync}$ where for all machine $i= 1\ldots
n$, they execute synchronously in action level. The synchronous
execution implies that in every $\mathcal{T}$, each machine
executes concurrently first $\sigma_1$, then $\sigma_2$, and so
on. By applying the analysis similar to step 1, the projection of
the execution trace for $\mathcal{S}_{sync}$ over machine $i$ with
operator $\circleddash_i$ in $\mathcal{T}$ is
\[ c_{i_1} \rightarrow_{\{\sigma_{1,1}, \ldots,\sigma_{1,n}\}} c_{i_2} \rightarrow_{\{\sigma_{2,1}, \ldots,\sigma_{2,n}\}} c_{i_3}  \ldots c_{i_{k+1}}\]

We thus derive the stuttering equivalence between projective
traces of $\mathcal{S}_{sync}$ and those of $\mathcal{S}$ over
machine $\circleddash_i$.

\noindent\textbf{(Step 3: Proof)} We conclude the theorem based on
the following statements.
\begin{list1}
 \item The specification  $\varphi$ is an in-machine property, describing the state evolvement of machine $i$ without terms related to message queues.
 \item The specification $\varphi$ without the next operator $\mathbf{X}$, based on \cite{peled:98:theo}, is stutter-closed, meaning that $\varphi$ can not distinguish two stutter-equivalent sequences.
 \item For the deployed system $\mathcal{S}$, the projective trace of $\mathcal{S}$ and the projective trace
 of $\mathcal{S}_{sync}$ are stutter equivalent over $\circleddash_i$.
            Thus $\mathcal{S} \models \varphi \Leftrightarrow \mathcal{S}_{sync} \models \varphi$, which completes the proof. 
\end{list1}

\end{proof}

The result of the theorem can be discussed in four directions.
\begin{list1}

\item \textbf{(Modeling)} Previously, it is difficult to model the
behavior of parallel machines which are asynchronous in action
(micro-instructions) level but synchronous in the logical level.
The result of the theorem brings significant ease for the
verification model construction. Our previous attempt is to model
the system using the model checker SPIN \cite{holzmann:2004:smc},
which enables the modeling of asynchronous behavior. However, the
channel definition only allows the synchronization of two
machines, making the synchronous modeling difficult.

\item \textbf{(Platform independent result)} Furthermore, one
immediate result from theorem $1$ is that deterministic assumption
enables us to "prove GCA system once, result valid for all". For
any deployed systems, once they satisfy the deterministic
assumption, properties always hold regardless of the actual
execution time.

\item \textbf{(Required time for verification)} Regarding the run
time of verification, our technique makes the verification of such
systems practicable. Originally the GCA model is an asynchronous
model; with the theorem we can construct a property-preserving
verification model by synchronizing actions. For this verification
model, the size of reachable state space traversed in verification
is exponentially smaller compared to the original GCA model.

\item \textbf{(Practicability of local properties)}  At first
glance, the use of local properties seems to restrict the
applicability. Nevertheless, since each machine can keep values
sent from other machines, and fault-tolerance mechanisms mostly
manipulate over these variables, our observation is that local
properties constrained by the theorem are powerful enough for
practical use (for examining fault-tolerance mechanisms).
\end{list1}

\subsection{Deterministic Assumption Modifies Scheduling Policies}

\subsubsection{Transition and transmission need time} In practice, actions and network transmission take time.
Nevertheless, the application and theory still apply. Here we give
an guidance regarding how deterministic assumption should be
assumed in practice.

\begin{defi} With definition in an \textbf{implementation} of a GCA system $\mathcal{S}$ as follows.
\begin{list1}
     \item Let an action of the type $\verb"send"()$ only broadcasts the message to the network without updating the queue of other machines.

     \item In the pattern definition of the machine, let $\sigma_{\alpha} := \verb"send"(a[x])$ and $\sigma_{\gamma} := \verb"send"(a[x])$
           be the predecessor and successor send operation for
          $\sigma_{\beta}:= \verb"receive"(a[x \oplus 1],\ldots, a[x \oplus n])$ in $\overline{\sigma}$, i.e.,
          $\alpha < \beta < \gamma$.

    \item On machine $i$, define $\tau_{Send,End,\alpha,i}$ be the time for machine $i$ where the action $\sigma_{\alpha}$ finishes
          sending the message to the network.

    \item On machine $j$, define $\tau_{Receive,Start,\beta,j}$ be the clock reading from $\mathcal{C}$ where the action $\sigma_{\beta}$ starts.

    \item On machine $j$, define $\tau_{Receive,End,\beta,j}$ be the clock reading from $\mathcal{C}$ where the action $\sigma_{\beta}$ ends.

    \item On machine $k$, define $\tau_{Send,Start,\beta,k}$ be the clock reading from $\mathcal{C}$ where the action $\sigma_{\gamma}$ starts.

    \item Let $T_{i,j}$ be the estimation of the worst case message transmission
    time between a message is sent from machine $i$ and present in the message queue of machine $j$, and define
    $\tau_{net} = \verb"max"_{i,j \in 1\ldots n}T_{i,j}$.
\end{list1}
then an implementation of a GCA system with $n$-redundancy
satisfies the deterministic assumption if, for all machine
$i$,$j$,$k$, $\tau_{Send,End,\alpha,i}+\tau_{net} <
\tau_{Receive,Start,\beta,j}$ and $\tau_{Receive,End,\beta,j} <
\tau_{Send,\gamma,k}$.

\end{defi}

\subsubsection{Least constraint for scheduling}
Previously, in Giotto the scheduling only needs to ensure that all
actions (micro-instructions in Giotto) executed in a major-tick
will not exceed the period of task $\mathcal{T}$. However, with
the deterministic assumption the implemented scheduling should be
modified. Nevertheless, we argue that the additional constraint
induced by deterministic assumption is the least, since it only
captures the dependencies over actions.

Also, modification of the scheduling policies can be minor. For
example, if the underlying execution platform is a synchronous
system (e.g., Esterel, VHDL), then no modification is needed. If
the underlying execution platform is a asynchronous system with
RTOS, if in $\overline{\sigma}$ no duplicate send actions occur,
then one method can be applied for assuring the deterministic
assumption: a fixed window (with length $\tau_{net}$) can be set
to distinguish between all sends (in the left) and receives (in
the right), shown in fig.~\ref{fig:Time.Window}.

At the same time, for a particular implementation to examine
whether it satisfies the deterministic assumption can be checked
using verification engines for timed systems, for example, UPPAAL
\cite{larsen:1997:un} or HyTech \cite{henzinger:1997:hmc}. Note
that it is merely the protocol (irreverent of data) that needs to
be checked, which greatly releases the burden of those
engines\footnote{Complexity for the verification of timed systems
is high, where we found it not suitable to have a model with data
information be checked.}.

\begin{figure}
 \centering
 \includegraphics[width=0.45\columnwidth]{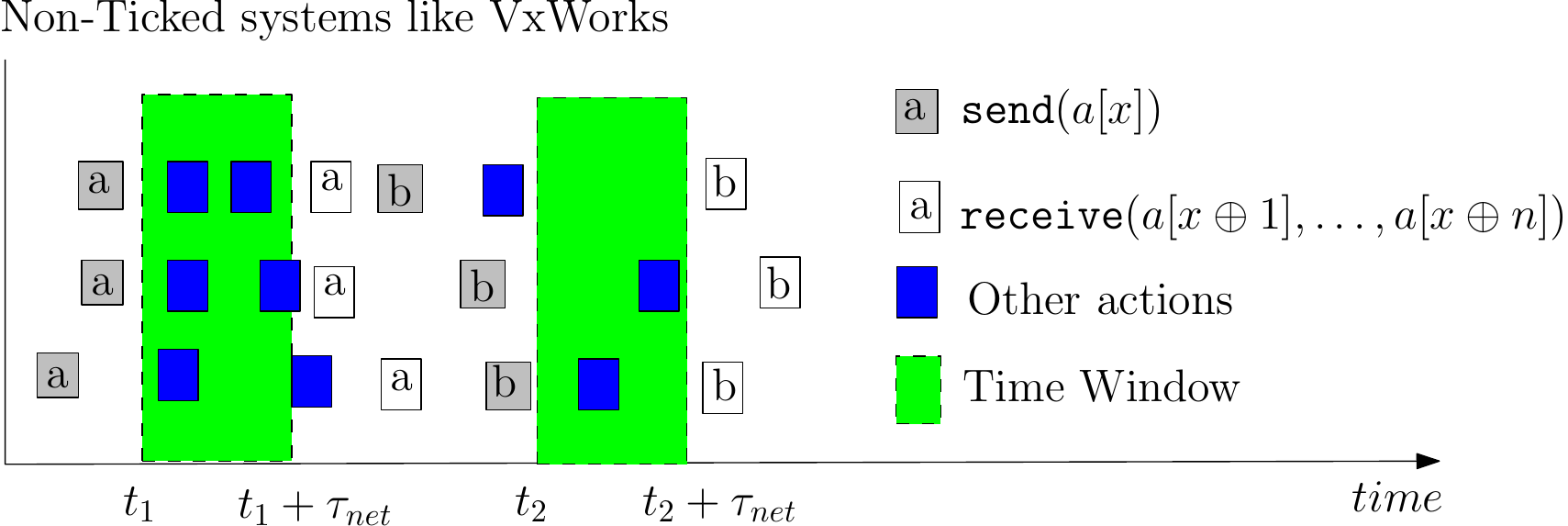}
 \caption{An example to use time window for ensuring the deterministic assumption.}
 \label{fig:Time.Window}
\end{figure}

\section{Effect of Faults\label{sec:Effect.Fault}}

In this section, we formally define the fault model used in FTOS,
and discuss the effect when faults are actuating on machines.

\subsection{Formal Construction of Fault Models\label{sec:Fault.Models}}

\begin{defi} Define the set of possible faults over a GCA system $\mathcal{S}$ be $\bigcup_{i= 1\ldots m}\xi_i$, where each fault $\xi$ is a $8$
tuple $(\verb"act", \verb"type", \sigma, i, j, k, k', \psi_k)$.
\begin{list1}
    \item  $\verb"act"$ is a boolean variable.
    \item  $\verb"type" \in\{\verb"WrongResult", \verb"FailSilent",$$\verb"MessageLoss", \verb"Corruption", \verb"Masquerade"\}$\footnote{The set of types was influenced by the IEC-61508 standard concerning networking faults. WrongResult is used to describe internal computing faults. The fault of MessageDelay will be later described and MessageRepetition as well as
             IncorrectSequence are of no influence in our MoC.
} is the type of fault.
    \item  $\sigma \in \verb"atomic"(\overline{\sigma})$.
    \item  $i\in \{1,\ldots, n\},\;j \in \{1,\ldots, |\overline{\sigma}|\}$ are fault-actuating indexes.
    \item  $k, k' \in \{1\ldots n\}$ are parameters.
    \item  $\psi_k: \verb"range"(\sigma)\rightarrow \verb"range"(\sigma)$ is the error function,
            where $\verb"range"(\sigma)$ is the range of $\sigma$.
\end{list1}
\end{defi}

The effect of faults is summarized as follows. Without loss of
generality assume every $\xi$ mentioned is actuating on machine
$i$.
\begin{list1}
    \item Let $\xi = ($\verb"act"$, \verb"WrongResult", \small{\sigma:= a[m]\leftarrow \verb"e"}\normalsize,
    i, j, k, k', \psi)$, then
        \begin{list1}
            \item At time $t$, if the value of \verb"act" is true, and $\Delta_{i,j}:= a[m]\leftarrow \verb"e"$ is activated,
            then for machine $i$, $\psi_k \circ \sigma$ updates the value of $a[m]$ to $\psi(e)$.
        \end{list1}
    \item Let $\xi = ($\verb"act"$, \verb"FailSilent", \sigma, i, j, k, k', \psi)$, then
            \begin{list1}
            \item At time $t$, if the value of \verb"act" is true, and $\Delta_{i,j}:= \sigma$ is activated, let
            $v_{i}$ be the previous configuration for $V_{i}$ before actuating $\sigma$, and $q_1, q_2, \ldots q_n$ be the
            configuration of message queues $Q_1, Q_2, \ldots Q_n$. Then in effect $\psi_k \circ \sigma$ does not update any value, i.e.,
            $V_{i}$ from $v_{i}$ to $v_{i}$, and for all $Q_i, i= 1 \ldots n$, from $q_i$ to $q_i$.
        \end{list1}
    \item Let $\xi = ($\verb"act"$, \verb"MessageLoss", \verb"send"(a[i]), i, j,$$ k, k', \psi)$, then
         \begin{list1}
            \item At time $t$, if the value of \verb"act" is true, and $\Delta_{i,j}:= \verb"send"(a[i])$ is activated, let $a$ be the previous
            configuration for $a[i]$. Let
            $q_1, q_2, \ldots q_n$ be the configuration of message queues $Q_1, Q_2, \ldots Q_n$. Then in effect
            $\psi_k \circ \sigma$ updates $q_{s}$, $s=1\ldots n,\; s\neq i,\; s\neq k$ to $q_s\circ (a[i], a)$, but
            $q_k$ to $q_k$.
        \end{list1}
    \item Let $\xi = ($\verb"act"$, \verb"Corruption", \verb"send"(a[i]), i, j,$$ k, k', \psi)$, then
         \begin{list1}
            \item At time $t$, if the value of \verb"act" is true, and $\Delta_{i,j}:= \verb"send"(a[i])$ is activated. Let $a$ be the previous
            configuration for $a[i]$. Let $q_1, q_2, \ldots q_n$ be the configuration of message queues $Q_1, Q_2, \ldots Q_n$. Then in
            effect $\psi_k \circ \sigma$ updates $q_{s}$, $s=1\ldots n,\; s\neq i,\; s\neq k$ to $q_s\circ (a[i], a)$, but $q_k$ to $q_k\circ(a[i], k')$.
        \end{list1}
    \item Let $\xi = ($\verb"act"$, \verb"Masquerade", \verb"send"(a[i]),i, j,$$ k, k', \psi)$, then
         \begin{list1}
            \item At time $t$, if the value of \verb"act" is true, and $\Delta_{i,j}:= \verb"send"(a[i])$ is activated. Let $a$ be the previous
            configuration for $a[i]$. Let
            $q_1, q_2, \ldots q_n$ be the configuration of message queues $Q_1, Q_2, \ldots Q_n$. Then in effect $\psi_k \circ \sigma$ updates  $q_{s}$, $s=1\ldots n,\; s\neq i,\; s\neq k'$ to $q_s\circ (a[i], a)$,
            but $q_{k'}$ to $q_{k'}\circ(a[k], a)$.
        \end{list1}
\end{list1}

In general, all faults can be viewed as being controlled by a
fault automaton.

\begin{defi} A fault model over a GCA system $\mathcal{S}$ is a extended timed automaton $\mathcal{A}_{fault}=(L, AP, T,  I, E, \Sigma, \bigcup_{i= 1\ldots m}\xi_i)$.
\begin{list1}
\item  $L$ is set of locations. \item
$AP=\{\verb"act"_1,...\verb"act"_m\}$ is a finite set of atomic
propositions, where $\verb"act"_i \in \mathbb{B}$ represents
       the activation status of a particular fault type.
\item  $T$ is a finite set of clocks. \item  $I$ is the invariant
condition mapping elements in $L$ to clock constraints. \item  $E$
is the location switch. \item  $\Sigma$ is a mapping $L
\rightarrow \mathbf{2}^{AP}$ indicating the set of faults
activating in the control location. \item  $\bigcup_{i= 1\ldots
m}\xi_i$ is the set of possible faults.
\end{list1}
\end{defi}

\subsection{Properties\label{sec:Properties}}

During execution, the valuation of
$\{\verb"act"_1,...\verb"act"_m\}$ can be viewed as a timed run
$(\overline{s}, \overline{\nu})=(s_1, t_1)(s_2, t_2)\ldots$
defined by the fault model automaton, where $s_1, s_2,\ldots \in
\mathbb{B}^{m}$ are valuations of atomic propositions, and $t_1,
t_2, \ldots \in \mathbb{R}$ are durations. For the complete
definition of timed run, see \cite{alur:1994:tta}. Let
$\xi=(\verb"act"_{\alpha}, \verb"type", \sigma_{\alpha}, i, j, k,
k', \psi)$. At time $t$, if $\verb"act"_{\alpha}$ is evaluated to
be true by $\mathcal{A}_{fault}$, and $\sigma_{\alpha}$ is
executing, then the fault happens.

In order to give the result where the theorem still holds with
fault models, we introduce a series of definitions used as
assumptions.

\begin{defi} Let $\mathcal{A}_{fault}=(L, AP, T,  I, E, \Sigma, $$\bigcup_{i= 1\ldots m}\xi_i)$ be the fault automaton actuating over the GCA system $\mathcal{S}$.
First define an infinite sequence
$(\overline{S_T},\overline{T_T})=(S_1,t_1)(S_2,t_2)\ldots$ as the
transition triggering sequence of $\mathcal{S}$ where
\begin{list1}
    \item For an integer $i \geq 1$, $S_i$ is the set of actions actuating on time $t_i$.
    \item Sequence $t_1,t_2,t_3 \ldots$ is a strictly ascending time sequence.
\end{list1}
Given a timed run $(\overline{s}, \overline{\nu})$ over
$\{\verb"act"_1,...\verb"act"_m\}$ induced by
$\mathcal{A}_{fault}$, and a transition triggering sequence
 $(\overline{S_T},\overline{T_T})$ of $S$, define the actuating fault sequence $\zeta_{(\overline{s}, \overline{\nu})}$ be an infinite sequence $(s_{\overline{1}}, t_{\overline{1}})(s_{\overline{2}}, t_{\overline{2}})(s_{\overline{3}}, t_{\overline{3}})\ldots$ where
\begin{list1}
    \item For all $\overline{i} \geq 1$, $s_{\overline{i}}$ is the set of actions influenced by $(\overline{s}, \overline{\nu})$.
    \item $t_{\overline{1}},t_{\overline{2}},t_{\overline{3}} \ldots$ is a strictly ascending time sequence.
\end{list1}
\end{defi}

\begin{defi}
 Let \small{$(\overline{S_{T_1}},\overline{T_{T_1}})=(S_{11},t_{11})(S_{21},t_{21})\ldots$} \normalsize and \small{$(\overline{S_{T_2}},\overline{T_{T_2}})=(S_{12},t_{12})(S_{22},t_{22})\ldots$} \normalsize be two transition triggering sequences over a GCA system $S$. Define two actuating fault sequences $\zeta_{(\overline{s}_1, \overline{\nu}_1)}$ over $(\overline{S_{T_1}},\overline{T_{T_1}})$ and $\zeta_{(\overline{s}_2, \overline{\nu}_2)}$ over $(\overline{S_{T_2}},\overline{T_{T_2}})$ be
effect-indistinguishable over $\mathcal{S}$ if the following
holds.
\begin{list1}
   \item Starting from time equals $0$, for each interval with length $\mathcal{T}$ (e.g., $[0,\mathcal{T}),[\mathcal{T},2\mathcal{T})\ldots$), let $(S_{1\alpha_1},t_{1\alpha_1})(S_{1\alpha_2},t_{1\alpha_2})\ldots\\(S_{1\alpha_k},t_{1\alpha_k})$ and
        $(S_{2\beta_1},t_{2\beta_1})(S_{2\beta_2},t_{2\beta_2})...(S_{2\beta_j},t_{2\beta_j})$ be the
        subsequences of $\zeta_{(\overline{s}_1, \overline{\nu}_1)}$ and $\zeta_{(\overline{s}_2, \overline{\nu}_2)}$ contained in this interval. Then
        $k=j$, and $\forall m= 1 \ldots k$, $S_{1\alpha_m}=S_{2\beta_m}$, i.e., two untimed actuating fault subsequences in the interval are identical.
\end{list1}
\end{defi}

\begin{defi}
Let $\mathcal{F}_1$, $\mathcal{F}_2$ be the fault model actuating
on a GCA system $\mathcal{S}$. Define  $\mathcal{F}_1$ and
$\mathcal{F}_2$ effect-indistinguishable if for all timed run
$(\overline{s_1}, \overline{\nu_1})$ of $\mathcal{F}_1$, for its
corresponding actuating fault sequence $\zeta_{(\overline{s}_1,
\overline{\nu}_1)}$, there exists a timed run $(\overline{s_2},
\overline{\nu_2})$ of $\mathcal{F}_2$ with corresponding actuating
fault sequence
 $\zeta_{(\overline{s}_2, \overline{\nu}_2)}$ such that $\zeta_{(\overline{s}_1, \overline{\nu}_1)}$ and $\zeta_{(\overline{s}_2, \overline{\nu}_2)}$
are effect-indistinguishable over $\mathcal{S}$, and vice versa.
\end{defi}

\begin{theo} Let $\mathcal{S}$ be a GCA system with n-redundancies satisfying the deterministic assumption,
and $\mathcal{S}_{sync}$ be a GCA (global-cycle-accurate) system
where each machine in $\mathcal{S}_{sync}$ executes synchronously
in actions. Let $\mathcal{F}$, $\mathcal{F}_{sync}$ be the fault
model actuating on $\mathcal{S}$ and $\mathcal{S}_{sync}$,
respectively. If $\mathcal{F}$ and $\mathcal{F}_{sync}$ are
effect-indistinguishable, then for verification conditions
$\varphi$,  $\mathcal{F}\times\mathcal{S} \models \varphi
\Leftrightarrow \mathcal{F}_{sync}\times\mathcal{S}_{sync} \models
\varphi$.

if  $\varphi$ has the following constraints.
\begin{enumerate}
    \item Property $\varphi$ is in PLTL and does not use the operator $\mathbf{X}$.
    \item There exists an $i\in 1\ldots n$ such that for all propositional variable
       $p$ used in property $\varphi$, where $p$ is a predicate over $V_i  \cup \Delta_{next_i}$.
\end{enumerate}
\end{theo}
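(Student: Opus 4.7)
The plan is to reduce Theorem~2 to Theorem~1 by treating each faulty action as a slightly different atomic action of the same syntactic shape, so that the fault-laden system can be viewed as a GCA system whose action sequences are chosen step-by-step from the underlying fault automaton. Concretely, I would first show that every fault type listed in Section~\ref{sec:Fault.Models} can be modeled as a composition $\psi_k \circ \sigma$ that still reads and writes the same variables (or message queues) as the original $\sigma$: \verb"WrongResult" and \verb"FailSilent" modify an assignment, while \verb"MessageLoss", \verb"Corruption", and \verb"Masquerade" only change which entries are appended to which queues. In particular, no fault introduces an extra dependency that could cause machine $i$ to consult the queue between two of its own actions, and the shape of the action sequence $\overline{\sigma}$ is preserved.

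Second, I would fix an interval $[\ell\mathcal{T},(\ell+1)\mathcal{T})$ and, given a timed run of $\mathcal{F}$, derive the induced actuating fault sequence $\zeta_{(\overline{s},\overline{\nu})}$. Using effect-indistinguishability, there exists a timed run of $\mathcal{F}_{sync}$ yielding a fault sequence whose untimed projection onto this interval is identical. This lets me replay, step by step, the analysis of Step~1 of Theorem~1: the valid/invalid classification of \verb"receive"() actions is unaffected, because the deterministic assumption constrains their timing on $\mathcal{S}$ regardless of whether a preceding \verb"send" was corrupted or lost (the ordering window is between send and receive \emph{actions}, not between successfully delivered messages). The consequences (1)--(3) of Step~1 carry over, with the only change being that the value appearing in $a[i]$ or in a queue entry is the image under the appropriate $\psi_k$; crucially, this value is the same on both sides, because the fault sequences coincide on the interval.

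Third, I would invoke Step~2 of the original proof to obtain stutter equivalence of the projected traces under $\circleddash_i$ between $\mathcal{F}\times\mathcal{S}$ and $\mathcal{F}_{sync}\times\mathcal{S}_{sync}$: in the asynchronous execution, actions of other machines and the intervening fault effects on their local state still appear as stuttering to $\circleddash_i$, whereas in the synchronous execution each micro-step advances all machines simultaneously. Induction across intervals $\mathcal{T}$ extends the equivalence to infinite traces, and Step~3's appeal to \cite{peled:98:theo} (PLTL without $\mathbf{X}$ is stutter-closed) then yields $\mathcal{F}\times\mathcal{S}\models\varphi \Leftrightarrow \mathcal{F}_{sync}\times\mathcal{S}_{sync}\models\varphi$.

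The main obstacle I expect is fault types that affect the message queue, particularly \verb"Masquerade" and \verb"Corruption", because they change the contents of queues belonging to machines \emph{other} than the one $\varphi$ talks about, and those corrupted entries may be read back via \verb"receive" actions on machine $i$. I would address this by arguing that, since effect-indistinguishability guarantees the same untimed fault-affected \verb"send" actions occur in the same period on both sides, the contents appended to $Q_i$ between two successive \verb"receive"() actions of machine $i$ form the same multiset in both $\mathcal{S}$ and $\mathcal{S}_{sync}$, so the value written into $a[i\oplus r]$ at the receive step coincides. Combined with the deterministic assumption's guarantee that no stray unrelated send slips in, this makes the per-interval projective behavior on machine $i$ identical up to stutter, which is exactly what the PLTL$\setminus\{\mathbf{X}\}$ specification $\varphi$ cannot distinguish.
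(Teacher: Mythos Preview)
Your proposal is correct and follows essentially the same route as the paper: the paper's own proof of Theorem~2 is the single line ``The theorem follows immediately with definitions described above,'' i.e., exactly the reduction to Theorem~1 via the effect-indistinguishability machinery that you spell out in detail. Your elaboration of Steps~1--3 and your explicit treatment of the queue-affecting fault types go well beyond what the paper provides, and your flagged concern about \verb"Masquerade" ordering is in fact acknowledged by the authors themselves in Section~\ref{sec:From.Theory.Practice} as a caveat outside the formal statement.
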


\begin{proof}
The theorem follows immediately with definitions described above.
\end{proof}


\subsection{From Theory to Practice\label{sec:From.Theory.Practice}}

\subsubsection{Overapproximation of the fault model}
Based on previous sections, we can construct the system model
which is synchronous with property preservation. However, it may
not be easy (or sometimes impossible) to construct a fault model
which is effect-indistinguishable as synchronous models. In this
way, an over-approximation of the fault model (where time is
abstracted to ticks) is sometimes necessary.


In FTOS, the occurrence between faults is measured by the
\emph{least time between failure} (LTBF). Let the system period be
$\mathcal{T}$ and the LTBF be $\eta$, then a extremely safe
approximation for number of occurrence of faults during a period
$\mathcal{T}$ is $\lceil\frac{\mathcal{T}}{\eta}\rceil$. Since the
above approximation can be too strong, a more desirable way to
underapproximate the LTBF can be as follows:  when
$\eta>3\mathcal{T}$, the time for the occurrence of two
consecutive faults should contain at least
$\lfloor\frac{\eta}{\mathcal{T}}\rfloor -1$ complete
periods\footnote{The untimed synchronous model can still specify
time, but with only finite granularity - the minimum time unit is
decided by $\mathcal{T}$ specified in the GCA model.}. When $\eta$
is large, the set of behaviors defined in the approximated fault
model is close to that of the original fault model. Since in
practice the fault model is based on probability measures, the
impact of such over-approximation of fault models is minor,
meaning that \emph{a spurious counter-example is still a trace
that the system can not defend}.

With the above descriptions, an untimed fault model compatible
with the verification model of GCA can be
constructed\footnote{Still, the LTBF can be vary large for
modeling. For this problem we have established some theoretical
criteria to reduce the LTBF efficiently with property preservation
\cite{Cheng:2009:Timing}.}.

\subsubsection{Scheduling policies in actual deployment}
In actual deployment (implementation), the set of traces may be
constrained compared to the original GCA model due to the
scheduling policy of the local machine. The reminder is that with
the fault effect \verb"masquerade", the synchronous model can
demonstrate more possibilities than a deployment because the
ordering for the correct and masqueraded messages in the
synchronous model is not uniquely determined, whilst an actual
deployment might have the order fixed due to the restricted
behavior.

\subsubsection{Message delay}
In our formal model construction, the message delay is not
considered for simplicity reasons. Nevertheless, it is possible to
extend the fault model with a timed queue for the storage of
delayed messages. In practice, the error of message delay can be
detected by our tool with an over-approximated fault model.

\section{Implementation: FTOS-Verify\label{sec:Implementation}}

We have implemented our first version of
\emph{FTOS-Verify}\footnote{The software can be retrieved on
demand.}, which is deployed as an Eclipse add-on for FTOS. In this
section, we summarize the current features.
\begin{enumerate}
    \item It enables an automatic translation from FTOS models into verification models; choices of analysis techniques
can be made between fault propagation and interval analysis.
    \item It automatically generates some formal properties regarding fault-tolerance.
    \item Also we elaborate on integrating existing model checkers into the Eclipse framework; under the Eclipse IDE the verification model
can be checked, where verification results are reorganized and
shown under the Eclipse console.
    \item Finally, since some counter example (in the trace file) may reach 300000 lines, we perform automatic analysis to prune out all
          unnecessary details such that the counter example is easy for non-verification experts to understand (around 300 to 700 lines).
\end{enumerate}

\subsection{Template-based Model Generation}
In previous sections, we outline all actions, while the
implementation details for each action (mechanism) are unknown. In
reality, we extract all mechanisms from existing
platform-dependent templates, and rewrite them into formats
acceptable by model checker. Since for some implementation
platforms, an action corresponds to a sequence of code which in
effect performs the update, we hope to maintain the "shape" of
sequential execution while the model is executed synchronously -
by doing so, these verification templates describing
fault-tolerance mechanisms can be easily adapted to generate new
templates for other platforms. To achieve this goal, we use
Cadence SMV \cite{mcmillan:cs} as our back-end verification
engine, for the reason that it supports an extended description
format which resembles sequential execution.

In FTOS, we apply template-based code generation to describe each
action based on the meta-model layer. In this way, for an user
defined model with different names (e.g., ports), when a model is
generated, instructions and formal specifications will be
annotated with correct names accordingly.

\subsection{Analysis Techniques 1: Boolean Program for Fault Propagation}
Currently verification of software systems relies heavily on
abstraction techniques. For example, in the SLAM
\cite{ball:2001:st} project the program is converted into boolean
programs based on predicate abstraction techniques. In our case, a
simple abstraction technique is to apply \emph{fault abstraction},
where we define a value of a port to be either \verb"Correct" or
\verb"Erroneous". These two values reflect the objective status of
the value in the port. Combined with the platform-independent
fault-tolerant mechanism, this method can provide good indication
regarding the correctness of fault tolerance mechanisms. In
FTOS-Verify, it is applicable for two built-in tests, namely
\verb"TestPortAbsolute" and \verb"TestLiveness".

At the same time, corresponding abstraction should be established
for the specification. Here the specification is generated
manually once accompanied with the code template for
fault-tolerance mechanisms, and it can be continuously reused
after being created.

\subsection{Analysis Techniques 2: Preliminary Bounded Interval Analysis}

In FTOS-Verify, we also experiment bounded interval analysis for
examining the validity of the test \verb"TestPortRelative", where
the user may specify the value domain for each port and input, and
a richer set of specification can thus be verified.

\section{Case Study: Balanced-Rod System \label{sec:Case.Study}}

By applying formal verification, we can test the applicability of
new or built-in mechanisms under different fault models before
deployment. Here we illustrate a case study of a balanced-rod
system (for the hardware, see fig.~\ref{fig:HW.Model.Rod}). The
purpose of the system is to use a PID controller with some data
acquisition boards to control the rod such that it keeps balancing
while pointing upwards. We expect to have fault-tolerant abilities
in the system, and our design intension is to use triple modular
redundancy for parallel execution, applying suitable voting
mechanism to preclude malfunctioned units. We illustrate the
verification process with an unsuitable design in FTOS.

\begin{figure}
 \centering
 \includegraphics[width=0.15\columnwidth]{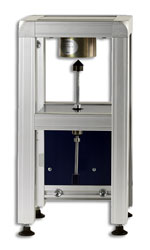}
 \caption{The demonstrator of the balanced-rod system.}
 \label{fig:HW.Model.Rod}
\end{figure}

\begin{figure}
 \centering
 \includegraphics[width=0.6\columnwidth]{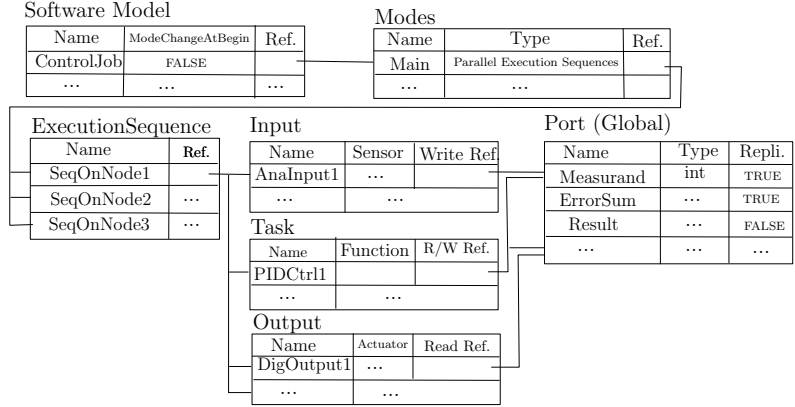}
 \caption{The software model of a improperly designed controller in FTOS (with irrelevant details omitted).}
 \label{fig:SW.Model.Rod}
\end{figure}

\subsection{Fault Model} In this system, four possible fault configurations are
possible, namely \verb"All_correct", \verb"1_2_correct",
\verb"2_3_correct",  and \verb"1_3_correct", implying that the
fault hypothesis only allows to have at most one ECU be faulty at
any instance. The fault is limited to the erroneous reading from
inputs or erroneous result of task functions.

\subsection{Software Model}  Fig.~\ref{fig:SW.Model.Rod} is the software model of the
balanced-rod controller. The job \verb"ControlJob" contains one
operation mode \verb"Main", where its execution strategy defines
three parallel sequences (\verb"SeqOnNode1", \verb"SeqOnNode2",
\verb"SeqOnNode3"). For \verb"SeqOnNode1", it consists of one
input read operation, one task, and several output operations. The
input \verb"AnaInput1" reads the data from the cache of the
dedicated hardware and stores in the port \verb"MeasureAnd". The
task function (e.g., \verb"PIDCtrl1") retrieves values from some
ports, performs stateless functions, and writes to some ports.
Each output operation (e.g., \verb"DigOutput1") retrieves the data
from one particular port and actuates. The port \verb"MeasureAnd"
has the replication property equal to \verb"TRUE", meaning that in
the deployment, the port is replicated to separate machines, but
there will be no mechanism to guarantee the consistency between
these replicated ports.

\subsection{Fault Tolerance Model}
Since only one machine should offer the output value, on the model
level a state machine \verb"REDUNDANCY_TRIGGER_TMR" is constructed
for the decision making. For example, the state machine states
that if the detected fault configuration is \verb"All_Correct" (no
machine is faulty) or \verb"1_2_Correct", then machine $1$ is
responsible for generating the output. The state machine will be
translated and deployed into each sub-system.

In this example, our fault-tolerance design first demands the
system to perform the test \verb"TestPortAbsolute" (pre-built in
FTOS) on port \verb"Result". For details, see appendix. When the
test fails, it simply performs the \verb"Ignore" operation, and
expects the mechanism of the state machine will work correctly,
ruling out the possibility for faulty machines to generate the
output.

\subsection{Properties} One local safety property automatically generated by FTOS-Verify
(e.g., annotated to machine $1$) is in fig.~\ref{fig:spec}.
Intuitively, this property specifies the case: if machine $1$
(ecu1) is responsible for offering the output\\
(\verb"REDUNDANCY_TRIGGER_TMR_operating_value = 0"), then the
value offered can not be faulty\\
(\verb"ecu1_local_ports_Out1.Result = 0").

\begin{figure}
{\begin{verbatim}Correct_DigOutput1_Result:
SPEC AG((REDUNDANCY_TRIGGER_TMR_operating_value=0)
    ->(ecu1_local_ports_DigOutput1.Result = 0));
\end{verbatim}}
 \caption{Specification automatically annotated by FTOS-Verify.}
 \label{fig:spec}
\end{figure}

\subsection{Result} In the
fault-propagation model, the result of model checking either
reports the system correct, or indicates a counter-example showing
a trace of fault emersion and propagation which leads to errors.
In this example, the error trace indicates the situation when
faults happen alternately between two fault configuration units.
The scenario (reported by model checkers; here simplified for
explanation) is as follows:

\begin{figure}
 \centering
 \includegraphics[width=0.5\columnwidth]{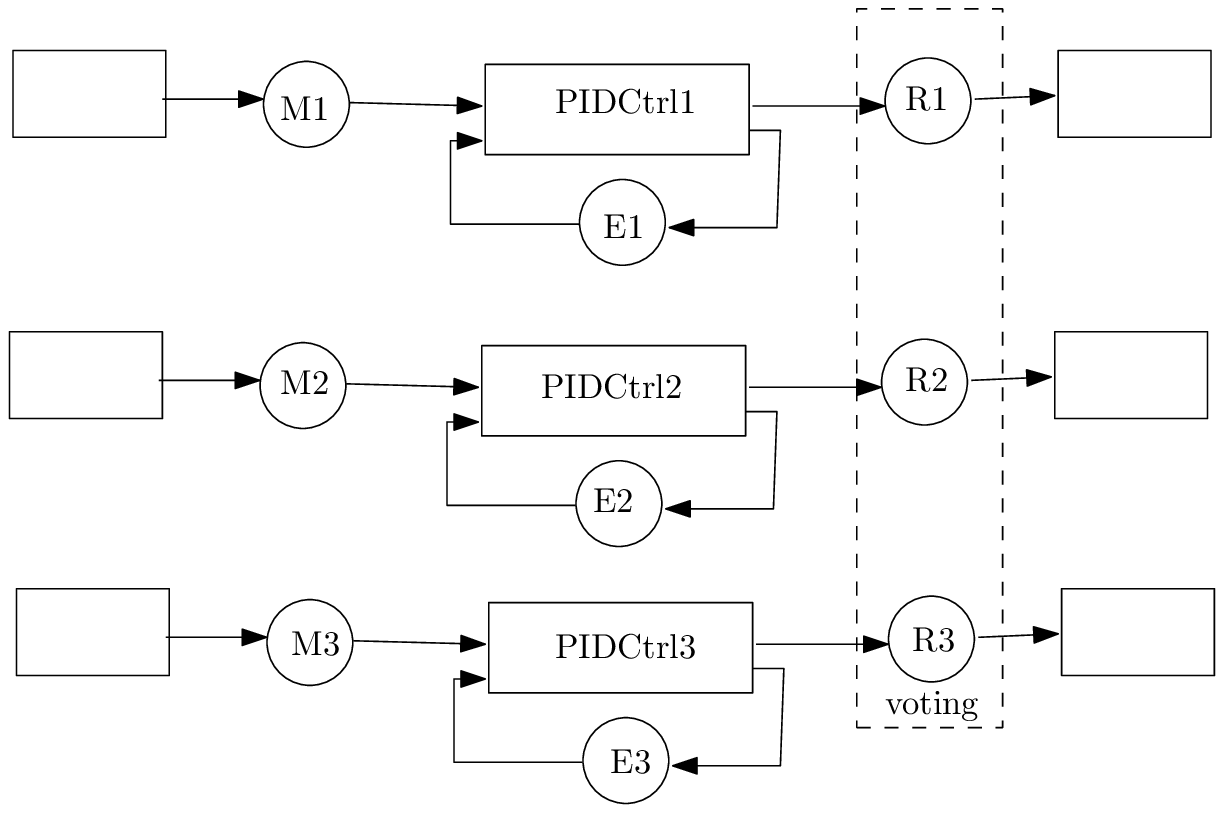}
 \caption{The deployment model of a wrongly designed controller (with irrelevant details omitted).}
 \label{fig:Fault.Propagation.Example}
\end{figure}

\begin{list1}
\item In early stages of the first period, input \verb"I1"
generates faulty results and stores into\\
\verb"ecu1_global_ports.MeasureAnd". As \verb"PIDCtrl1" uses data
from \verb"ecu1_global_ports.MeasureAnd", the result generated by
\verb"PIDCtrl1" is faulty, implying
\verb"ecu1_global_ports.ErrorSum" and\\
\verb"ecu1_global_ports.Result" being faulty.

\item In later stages of the first period, values of
\verb"ecu1_global_ports.Result", \verb"ecu2_global_ports.Result",
and \verb"ecu3_global_ports.Result" will be sent to other machines
and examined using voting mechanism. For ecu1, values from ecu2
and ecu3 are stored in \verb"ecu1_port_from2.Result" and
\verb"ecu1_port_from3.Result", respectively. All results of
\verb"TestPortAbsolute" from each machine indicate that ecu1 is
faulty, and the\\ \verb"REDUNDANCY_TRIGGER_TMR" on each machine
indicates that output should be generated by ecu2. Also, the error
happened in machine $1$ is ignored.

\item In early stages of the second period, input \verb"I2"
generates faulty results and stores into\\
\verb"ecu2_global_ports.MeasureAnd". Therefore, the result
generated by \verb"PIDCtrl2" is faulty, implying\\
\verb"ecu2_global_ports.Result" and \verb"ecu2_global_ports.Error"
being faulty. Although \verb"I1" is not faulty, \verb"PIDCtrl1"
still cannot generate correct result because it also relies on the
value in \verb"ecu1_global_ports.ErrorSum". Therefore,
\verb"ecu1_global_ports.Result" is still faulty at this period.

\item In later stages of the second period, values of
\verb"ecu1_global_ports.Result", \verb"ecu2_global_ports.Result",
and \verb"ecu3_global_ports.Result" will be again sent and
examined. As values of \verb"ecu1_global_ports.Result" and
\verb"ecu2_global_ports.Result" are faulty\footnote{This can be
further interpreted as the case where ecu1 and ecu2 hold the same
erroneous value.}, machine $1$ and $2$ will treat $3$ faulty,
whilst machine $3$ treats $1$ and $2$ faulty later when exchanging
information regarding the correct status of machines, opinions of
machine $3$ is precluded, and \verb"REDUNDANCY_TRIGGER_TMR"
decides that machine $1$ is responsible for offering the output,
which is undesired (although the system follows desired fault
configurations).

\item A fix of the example is to add port unification strategies
over port \verb"ErrorSum" with communication after\\
\verb"TestPortAbsolute"(). For example, in ecu1, take the median
value among \{\verb"ecu1_global_ports.ErrorSum",
\verb"ecu1_port_from2.ErrorSum", \verb"ecu1_port_from3.ErrorSum"\}
and store it as an update for\\ \verb"ecu1_global_ports.ErrorSum".

\end{list1}

\subsection{More testcases and counter examples}The document of the software also describes cases with much complicated counter-examples for proving the
invalidity of some fault-tolerant mechanisms under certain fault
models.

\subsection{Efficiency} The time required to generate all models is approximately $4$
seconds\footnote{Programs are executed under a machine with Intel
Dure Core 2.6 Ghz CPU and 2GB memory.}, and the size of the
generated model can vary between $7000$ to $11000$ lines. The
process of verification and generation of the counter example
takes nearly $60$ seconds for the simplest case - time increases
when complicated mechanisms are equipped in the model. For similar
cases where the property is proven to be correct, the required
time is no greater than $250$ seconds. An earlier (preliminary)
version of FTOS-Verify based on SPIN without using the
deterministic assumption was not able to generate results in
reasonable time ($<60$ min) even for the simplest testcase. These
numbers show a drastic reduction of the required time and
demonstrate the effect of the deterministic assumption.

\section{Related Work\label{sec:Related.Work}}

Existing work on design or verification for fault-tolerance
mechanisms can be categorized into two different categories.
Within the first category, researchers are focusing on verifying
the applicability of a single fault-tolerance mechanism based on a
concrete fault-model. For the second category researchers are
offering languages or methodologies for the use of verification,
e.g., \cite{bernardeschi:2000:fvf,owre:1995:fvf}.
Nevertheless, the above work does not place their focus on
automatic model generation for verification. As the system models
and the fault models influence the applicability of the mechanism,
when they are modified, the corresponding verification models are
required to be reconstructed. Construction and modification
manually can be time consuming or error prone; in FTOS-Verify, the
verification model is generated automatically as the model
changes.

FTOS is not the only system which focus on model based development
for fault-tolerant systems. Pinello et al
\cite{Pinello:2004:FTDF}. also focus on model-level description of
software models, hardware models, and fault models, and allow
automatic synthesis from models to deployments; their model (FTDF)
is based on the extension of the dataflow model. Formal analysis
techniques based on fault-propagation are later proposed
\cite{mckelvinjr:2005:faf}. There are inherent differences between
two MoCs which make it difficult to compare the results. Our
observation is that the tree-construction algorithm (similar to
backward reachability analysis) as proposed in FTDF does not
really apply symbolic techniques commonly used in verification and
might suffer from exponential complexities\footnote{The
complexities is $\mathbf{O}(D^{2M})$, where $D$ is the average
number of inputs in an actor, and $M$ is the number of actors,
given an undesired event.}; we relate ourselves to the
verification engine with symbolic state space manipulation, making
the detection of counter examples in large systems possible.
Secondly, our analysis is not restricted to reachability analysis
but enables the use of temporal logic. Lastly, our theorem enables
the state-space reduction. A conjecture is that our theorem can be
further extended to give a supporting theorem for their
MoC\footnote{i.e., there exists a padding of no-ops for a FTDF
model such that local properties can be verified synchronously.}.

There are other model based tools for embedded control with
verification abilities integrated, e.g., \cite{schaefer:2008:crm},
but verifying fault-tolerance mechanisms is not their primary
focus. Also we find in most cases, the deployment from the model
is synchronous, where FTOS is focusing on the deployment over
either synchronous or asynchronous systems.

\section{Conclusion and Future Work\label{sec:Conclusion.Future.Work}}
Our contribution is summarized as follows.
\begin{list1}
    \item We concentrate on the modeling and verification of fault-tolerant systems, and mathematically construct the GCA system, the
          formal model of FTOS, for the use of verification.
    \item The deterministic assumption relates all implementations with common features regardless of the underlying MoC
          of the platform (synchronous or asynchronous). It also simplifies the model
          construction, and reduces the reachable state space for verification. Due to the redundancies in fault-tolerance systems,
          constrained local properties used in verification can resemble global specifications.
          The work can be also applied for the analysis of systems based on the FLET paradigms proposed within Giotto.
   \item Regarding implementation, combined with existing templates in the synchronous platform and the asynchronous platform, the presented verification technique enables designers with limited knowledge in verification
          to test the applicability of fault-tolerance mechanisms formally.

\end{list1}

Except work with industrial case studies, one extending work is to
eliminate the burden of the user by generating the
task-implication-graph automatically. A
\emph{task-implication-graph} specifies for each output port,
whether an incorrect function or an erroneous input will influence
the generated value. If there exists user-code (in general they
are task/control functions), then currently the translation into
languages acceptable by SMV is done by designers\footnote{In the
current version, the analyzer will approximate and imply that
every output of a given task is erroneous if there exists an
erroneous input, provided that no predefined
task-implication-graph is given}. For a task function with the set
of input ports $P_{in}$, the powerset of input ports and the set
inclusion relation $\sqsubseteq$ form a lattice
$(\mathbf{2}^{P_{in}},\sqsubseteq)$. Therefore, by using
techniques similar to monotone data-flow analysis
\cite{kam:1977:mdf}, a task-implication-graph can be constructed
automatically.

Finally, we will extend our modeling technique for the analysis
and verification of various models, e.g., SDF
\cite{girault:1999:hfs} (with hierarchical extensions) and FTDF
models with a predefined scheduling.


\appendix
\textbf{(Algorithmic Sketch of TestPortAbslute)} The mechanism of
\verb"TestPortAbsolute" consists of two rounds. Without loss of
generality, let $i$, $j$ and $k$ be the deployed sub-systems, and
let the algorithm describe behavior on machine $i$.
In the first round, it compares received values from other
machines with the local variable. For machine $i$ receiving from
sender $j$, if the value is different, then it evaluates $j$ be
faulty. The result of the evaluation is stored in an array
$(i_i,j_i,k_i)=\mathbf{2}^{3}$ and sent to other machines.

In the second round, as machine $i$ receives the array from
others, it performs a voting between elements in the set
$\{i_i,i_j,i_k\}$, $\{j_i,j_j,j_k\}$, and $\{k_i,k_j,k_k\}$. The
result of the voting finalizes the decision whether a machine is faulty or
not.

\bibliographystyle{abbrv}

\begin{thebibliography}{10}

\bibitem{alur:1994:tta}
R.~Alur and D.~Dill.
\newblock {A Theory of Timed Automata}.
\newblock {\em Theoretical Computer Science}, 126(2):183--235, 1994.

\bibitem{ball:2001:st}
T.~Ball and S.~Rajamani.
\newblock {The SLAM Toolkit}.
\newblock In {\em Proceedings of the 13th International Conference on Computer
  Aided Verification (CAV'01)}, pages 260--264. Springer-Verlag, 2001.

\bibitem{bernardeschi:2000:fvf}
C.~Bernardeschi, A.~Fantechi, and L.~Simoncini.
\newblock {Formally Verifying Fault Tolerant System Designs}.
\newblock {\em The Computer Journal}, 43(3):191--205, 2000.

\bibitem{buckl:2007:gft}
C.~Buckl, M.~Regensburger, A.~Knoll, and G.~Schrott.
\newblock {Generic Fault-Tolerance Mechanisms Using the Concept of Logical
  Execution Time}.
\newblock In {\em Proceedings of the 13th Pacific Rim International Symposium
  on Dependable Computing (PRDC'07)}, pages 3--10. IEEE, 2007.

\bibitem{Cheng:2009:Timing}
C.~Cheng, C.~Buckl, J.~Esparza, and A.~Knoll.
\newblock {Modeling and Verification for Timing Satisfaction of Fault-Tolerant
  Systems with Finiteness}.
\newblock Technical report (arXiv:0905.3951), TU Munich, 2009.

\bibitem{ghosal:2006:htl}
A.~Ghosal, A.~Sangiovanni-Vincentelli, C.~Kirsch, T.~Henzinger, and D.~Iercan.
\newblock {A hierarchical coordination language for interacting real-time
  tasks}.
\newblock In {\em Proceedings of the 6th ACM \& IEEE International conference
  on Embedded software (EMSOFT'06)}, pages 132--141, 2006.

\bibitem{girault:1999:hfs}
A.~Girault, B.~Lee, and E.~Lee.
\newblock {Hierarchical finite state machines with multiple concurrency
  models}.
\newblock {\em IEEE Transactions on Computer-Aided Design of Integrated
  Circuits and Systems}, 18(6):742--760, 1999.

\bibitem{henzinger:1997:hmc}
T.~Henzinger, P.~Ho, and H.~Wong-Toi.
\newblock {HYTECH: a model checker for hybrid systems}.
\newblock {\em International Journal on Software Tools for Technology Transfer
  (STTT)}, 1(1):110--122, 1997.

\bibitem{Henzinger01giotto:a}
T.~A. Henzinger, B.~Horowitz, and C.~M. Kirsch.
\newblock Giotto: A time-triggered language for embedded programming.
\newblock In {\em Proceedings of the First International Workshop on Embedded
  Software (EMSOFT'01)}, pages 166--184. Springer-Verlag, 2001.

\bibitem{holzmann:2004:smc}
G.~Holzmann.
\newblock {\em {The Spin Model Checker: Primer and Reference Manual}}.
\newblock Addison-Wesley Professional, 2004.

\bibitem{kam:1977:mdf}
J.~Kam and J.~Ullman.
\newblock {Monotone data flow analysis frameworks}.
\newblock {\em Acta Informatica}, 7(3):305--317, 1977.

\bibitem{larsen:1997:un}
K.~Larsen, P.~Pettersson, and W.~Yi.
\newblock {Uppaal in a nutshell}.
\newblock {\em International Journal on Software Tools for Technology Transfer
  (STTT)}, 1(1):134--152, 1997.

\bibitem{mckelvinjr:2005:faf}
J.~Mark L.~McKelvin, G.~Eirea, C.~Pinello, S.~Kanajan, and A.~L.
  Sangiovanni-Vincentelli.
\newblock A formal approach to fault tree synthesis for the analysis of
  distributed fault tolerant systems.
\newblock In {\em Proceedings of the 5th ACM international conference on
  Embedded software (EMSOFT'05)}, pages 237--246, 2005.

\bibitem{mcmillan:cs}
K.~McMillan.
\newblock {Cadence SMV}.
\newblock {\em Cadence Berkeley Labs, CA}.

\bibitem{owre:1995:fvf}
S.~Owre, J.~Rushby, N.~Shankar, and F.~von Henke.
\newblock {Formal Verification for Fault-Tolerant Architectures: Prolegomena to
  the Design of PVS}.
\newblock {\em IEEE Transactions on Software Engineering}, pages 107--125,
  1995.

\bibitem{peled:98:theo}
D.~Peled, T.~Wilke, and P.~Wolper.
\newblock An algorithmic approach for checking closure properties of temporal
  logic specifications and omega-regular languages.
\newblock {\em Theoretical Computer Science}, 195(2):183--203, 1998.

\bibitem{Pinello:2004:FTDF}
C.~Pinello, L.~P. Carloni, and A.~L. Sangiovanni-Vincentelli.
\newblock Fault-tolerant deployment of embedded software for cost-sensitive
  real-time feedback-control applications.
\newblock In {\em Design, Automation and Test in Europe Conference and
  Exhibition (DATE'04)}, volume~2, pages 1164--1169. IEEE Computer Society,
  2004.

\bibitem{schaefer:2008:crm}
I.~Schaefer and A.~Poetzsch-Heffter.
\newblock {Compositional Reasoning in Model-Based Verification of Adaptive
  Embedded Systems}.
\newblock In {\em Proceedings of the 6th IEEE International Conference
  onSoftware Engineering and Formal Methods (SEFM'08)}, pages 95--104. IEEE,
  2008.

\bibitem{simmons:1998:tdl}
R.~Simmons and D.~Apfelbaum.
\newblock A task description language for robot control.
\newblock In {\em Proceedings of the IEEE/RSJ International Conference on
  Intelligent Robots and Systems (IROS'98)}, volume~3, pages 1931--1937 vol.3,
  Oct 1998.

\end{thebibliography}

\end{document}